\ifdefined\pdftexversion \pdfoutput=1 \fi
\documentclass[letterpaper,11pt]{article}

\usepackage[margin=1.25in]{geometry}
\usepackage{amsmath,amssymb,amsthm}
\usepackage{url}
\usepackage{tikz}
\usetikzlibrary{arrows.meta,positioning,shapes.geometric,fit,backgrounds,calc}
\usepackage{booktabs}
\usepackage{enumitem}

\usepackage{microtype}
\usepackage{hyperref}
\usepackage{setspace}
\usepackage{titlesec}
\usepackage{parskip}
\usepackage{xcolor}
\usepackage{caption}
\usepackage{float}

\hypersetup{colorlinks=true,linkcolor=black,citecolor=black,urlcolor=black}

\theoremstyle{definition}
\newtheorem{definition}{Definition}[section]
\newtheorem{proposition}{Proposition}[section]

\titleformat{\section}{\large\bfseries}{\thesection.}{0.5em}{}
\titleformat{\subsection}{\normalsize\bfseries}{\thesubsection}{0.5em}{}
\titleformat{\subsubsection}{\normalsize\itshape}{\thesubsubsection}{0.5em}{}

\newcommand{\grade}[1]{\delta\!\left(#1\right)}

\tikzset{
  box/.style={rectangle, rounded corners=3pt, draw=black!70, fill=gray!8,
              text width=2.8cm, align=center, minimum height=0.9cm, font=\small},
  wbox/.style={box, text width=3.4cm},
  tbox/.style={box, text width=2.2cm},
  dbox/.style={rectangle, rounded corners=3pt, draw=black!40, fill=white,
                 text width=2.4cm, align=center, minimum height=0.8cm,
                 font=\small\itshape, dashed},
  arr/.style={-{Stealth[length=5pt]}, thick},
  darr/.style={-{Stealth[length=5pt]}, dashed, gray},
}

\begin{document}

\begin{center}
{\Large\bfseries The Program Hypergraph: Multi-Way Relational Structure\\
for Geometric Algebra, Spatial Compute, and Physics-Aware Compilation}

\vspace{0.8em}
Houston Haynes\\
SpeakEZ Technologies, Asheville, NC\\
\texttt{hhaynes2@alumni.unca.edu}\\
\vspace{0.3em}
March 2026
\end{center}

\begin{abstract}
The Program Semantic Graph (PSG) introduced in prior work on Dimensional Type Systems and Deterministic Memory Management encodes compilation-relevant properties as binary edge relations between computation nodes. This representation is adequate for scalar and tensor computations, but becomes structurally insufficient for two classes of problems that are increasingly central to heterogeneous compute. The first, recognized in the context of targeting spatial dataflow accelerators, is that tile co-location and routing constraints in NPU architectures are inherently multi-way: they govern sets of operations jointly and cannot be correctly decomposed into independent pairwise constraints. The second, larger in scope and consequence, is that geometric algebra computation (specifically, the graded multi-way products that underlie Clifford algebra neural networks, physics simulation, and mesh-based finite element analysis) cannot be represented faithfully as sequences of binary operations without loss of algebraic identity and accumulation of avoidable numerical error.

This paper introduces the Program Hypergraph (PHG) as a principled generalization of the PSG that promotes binary edges to hyperedges of arbitrary arity. We demonstrate that grade in Clifford algebra is a natural dimension axis within the existing DTS abelian group framework, that the geometric product sparsity derivable from grade inference eliminates the primary performance objection to Clifford algebra neural networks without manual specialization, and that the $k$-simplex structure of mesh topology is a direct instance of the hyperedge formalism applied to geometric computation. We assess the existing geometric algebra library ecosystem, identifying the consistent type-theoretic gap that no current system addresses, and show that the PHG closes it within the Fidelity compilation framework. The practical consequence is a compilation framework where geometric correctness, memory placement, numerical precision selection, and hardware partitioning are jointly derivable from a single graph structure that the language server exposes as interactive design-time feedback.
\end{abstract}

\hrule
\vspace{1em}

\section{Introduction}

\subsection{The Binary Edge Boundary}

The Program Semantic Graph introduced in \cite{dts-dmm} encodes computation as a directed graph where nodes represent values and computations, edges represent data dependencies, and edge annotations carry dimensional, coeffect, and lifetime information. This structure is sufficient for the class of computations that the existing Clef/Fidelity framework targets: scalar arithmetic, tensor operations, memory-managed data structures, and multi-target lowering through MLIR.

Two classes of problems expose a structural limitation of the binary edge model. The first was recognized in the context of heterogeneous hardware mapping; the second, introduced through engagement with the geometric algebra research community, is broader in scope and carries implications that extend well beyond the compilation problem that initially motivated it.

The first class is spatial dataflow architecture mapping. The AMD XDNA~2 NPU arranges AI Engine tiles in a two-dimensional grid with explicit programmer-managed data movement via DMA and configurable interconnect~\cite{rico2024}. Mapping a subgraph of computations to this architecture requires co-locating sets of operations on tiles, configuring routes between tile groups, and partitioning columns into spatial workload contexts. A tile placement constraint is not a constraint between two nodes; it is a constraint over the set of nodes that must reside on the same tile or within the same column. The natural formalism is a hyperedge connecting all members of the co-location set. A clique of binary edges is semantically incorrect: it asserts that each pair must be co-located independently, which does not entail that all members are co-located together.

The second class is geometric algebra computation, and it is by far the richer domain. The geometric product of two multivectors is bilinear; as a binary relation it appears to fit the existing graph model. The issue is not the product in isolation but the compound operations built from products. The meet of two subspaces, the join of three points to form a plane, the intersection of four planes at a point: these are operations with three or more semantically indivisible inputs. Decomposing a three-way join into a sequence of binary joins introduces intermediate nodes that have no geometric meaning, cannot be assigned well-typed grade annotations, and accumulate floating-point error at each intermediate step. The algebraic identity that the join of three points is a plane is a single fact about a three-way relation; encoding it as two sequential binary relations loses that identity.

The significance of this second class extends considerably beyond the immediate compilation problem. Geometric algebra, and Clifford algebra more generally, provides a coordinate-free framework for physics simulation, robotics, fluid dynamics, relativistic electrodynamics, and mesh-based finite element analysis. The emerging field of geometric algebra neural networks exploits this structure for equivariant learning, with demonstrated state-of-the-art performance on physical simulation benchmarks~\cite{ruhe2023,zhdanov2024}. The PHG provides the compilation infrastructure for this domain, opening a path from the geometric algebra research community's theoretical advances to the heterogeneous hardware targets that define the practical frontier of scientific computing.

Both problems point to the same structural requirement: a generalization of the binary edge to a relation of arbitrary arity.

\subsection{Contributions}

This paper makes four claims, presented in the order in which the motivating problems were encountered and recognized.

\begin{enumerate}[leftmargin=1.5em]
\item \textbf{The PHG is a minimal, well-motivated generalization of the PSG.} The transition from binary edges to hyperedges preserves all existing PSG properties while enabling the representation of multi-way constraints that binary edges cannot express without information loss.

\item \textbf{Spatial dataflow architectures require hyperedge co-location constraints.} The mapping of computation subgraphs to spatial hardware is a hypergraph partitioning problem. The PHG's hyperedge structure provides the natural representation for these constraints, enabling the Fidelity framework to extend its existing multi-target lowering infrastructure to spatial accelerators.

\item \textbf{Grade in Clifford algebra is a DTS dimension axis, and grade inference derives geometric product sparsity.} The graded structure of Clifford algebra forms a finitely generated abelian group under the outer product. The existing DTS inference machinery, extended with a grade dimension, can derive the non-zero entries of the geometric product Cayley table from type signatures alone, eliminating the primary computational objection to geometric algebra neural networks without manual specialization.

\item \textbf{Physics-aware computation exposes multi-way constraints that compose naturally in the PHG.} Physics-informed neural networks, forward-mode automatic differentiation over geometric computations, and high-performance computing kernels over mesh topologies all expose constraints that are inherently multi-way. The PHG provides a unified representation for their joint resolution across heterogeneous hardware targets.
\end{enumerate}

\subsection{Scope and Context}

The system described here is an extension of the Clef programming language and the Fidelity compilation framework introduced in~\cite{dts-dmm}, whose dimensional types build on Kennedy's units of measure~\cite{kennedy2009}. All formal properties of the DTS and DMM systems are assumed without restatement. The geometric algebra analysis in Section~\ref{sec:ga} assumes familiarity with Clifford algebras at the level of an introductory graduate course; the treatment of specific algebras (PGA, CGA) is self-contained where it matters for the compilation argument.

The Fidelity compilation pipeline compiles Clef source through a canonical MLIR~\cite{lattner2021} middle-end (Composer) that fans out to multiple backend pathways: LLVM for CPU, GPU, MCU, and WebAssembly targets; CIRCT for FPGA synthesis; and MLIR-AIE for AI Engine architectures. Figure~\ref{fig:pipeline} shows how the PHG sits within this pipeline and how the three primary application domains connect through it.

\subsection{Intellectual Continuity: The DCont/Inet Headwaters}

The PHG traces its structural motivation to the DCont/Inet duality established in prior Fidelity framework work~\cite{dcont-inet}. That work identifies two fundamental patterns into which Clef computation expressions decompose during compilation: the DCont (delimited continuation) path for sequential effectful computations, and the Inet (interaction net) path for pure parallel computations where operations are structurally independent and can reduce simultaneously.

The DCont/Inet duality is grounded in the algebraic structure of each class. Delimited continuations form a monad: they compose sequentially, with left identity and associativity laws that allow the compiler to eliminate unnecessary continuation frames and fuse adjacent continuations. Interaction nets form a symmetric monoidal category: they compose in parallel, with associativity and braiding laws that allow the compiler to regroup and reorder operations freely. The DCont dialect for MLIR~\cite{kang2025} and the Inet dialect~\cite{coll2025} provide the intermediate representation infrastructure for both paths within the Fidelity pipeline. Actors in the Olivier model are structurally DCont with arena-scoped RAII: each actor's message handler is a continuation captured at a suspension point, and the arena is the coeffect resource~\cite{petricek2014} that the computation requires from its context. The Prospero orchestration layer manages arena lifetimes across the actor hierarchy, making the actor model a zero-runtime realization of the DCont pattern at scale.

The PHG extends the Inet path in a specific and necessary direction. An interaction net rule fires on a binary active pair: two agent nodes connected by their principal ports. When the independent reductions are not over scalars but over graded multivector components, the independence structure is determined by the grade algebra, and the unit of reduction is a $k$-tuple of nodes whose joint grade constrains the output. The geometric product of a grade-$p$ and a grade-$q$ element is binary, but produces contributions to multiple output grades simultaneously; the outer product of $k$ grade-1 elements is $k$-ary, and its result is a property of all $k$ inputs jointly. The PHG hyperedge is therefore the Inet active pair generalized to arity $k$, where $k$ is determined by the algebraic structure of the domain.

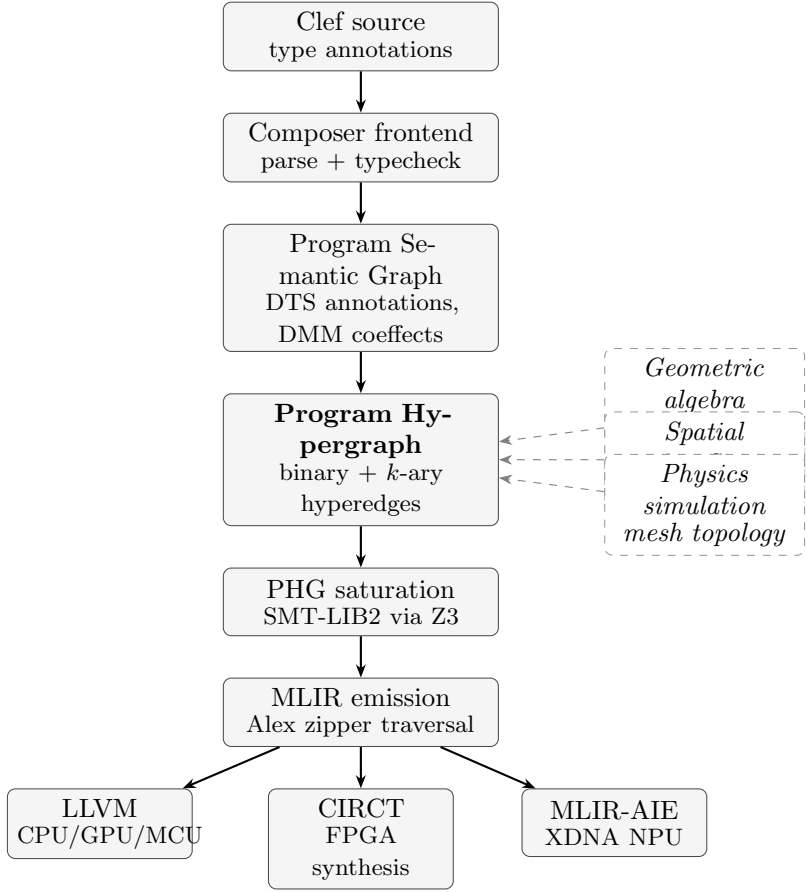
\begin{figure}[h]
\centering
\begin{tikzpicture}[node distance=0.55cm and 1.2cm]
  \node[wbox] (src)  {Clef source\\[-2pt]{\footnotesize type annotations}};
  \node[wbox, below=of src]  (fcs)  {Composer frontend\\[-2pt]{\footnotesize parse + typecheck}};
  \node[wbox, below=of fcs]  (psg)  {Program Semantic Graph\\[-2pt]{\footnotesize DTS annotations, DMM coeffects}};
  \node[wbox, below=of psg]  (phg)  {\textbf{Program Hypergraph}\\[-2pt]{\footnotesize binary + $k$-ary hyperedges}};
  \node[wbox, below=of phg]  (sat)  {PHG saturation\\[-2pt]{\footnotesize SMT-LIB2 via Z3}};
  \node[wbox, below=of sat]  (mlir) {MLIR emission\\[-2pt]{\footnotesize Alex zipper traversal}};

  \node[tbox, below left=0.55cm and 0.4cm of mlir]  (llvm)  {LLVM\\[-2pt]{\footnotesize CPU/GPU/MCU}};
  \node[tbox, below=0.55cm of mlir]                 (circt) {CIRCT\\[-2pt]{\footnotesize FPGA synthesis}};
  \node[tbox, below right=0.55cm and 0.3cm of mlir] (aie)   {MLIR-AIE\\[-2pt]{\footnotesize XDNA NPU}};

  \node[dbox, right=1.4cm of phg, yshift=0.6cm]  (geo)  {Geometric algebra\\[-2pt]grade hyperedges};
  \node[dbox, right=1.4cm of phg]                (spa)  {Spatial dataflow\\[-2pt]co-location};
  \node[dbox, right=1.4cm of phg, yshift=-0.6cm] (phy)  {Physics simulation\\[-2pt]mesh topology};

  \draw[arr] (src) -- (fcs);
  \draw[arr] (fcs) -- (psg);
  \draw[arr] (psg) -- (phg);
  \draw[arr] (phg) -- (sat);
  \draw[arr] (sat) -- (mlir);
  \draw[arr] (mlir) -- (llvm);
  \draw[arr] (mlir) -- (circt);
  \draw[arr] (mlir) -- (aie);

  \draw[darr] (geo) -- (phg);
  \draw[darr] (spa) -- (phg);
  \draw[darr] (phy) -- (phg);
\end{tikzpicture}
\caption{The PHG within the Fidelity/Composer compilation pipeline. Three primary application domains feed into the PHG through grade, co-location, and topology hyperedges. The pipeline fans out to LLVM, CIRCT, and MLIR-AIE backends from a shared MLIR middle-end.}
\label{fig:pipeline}
\end{figure}

\section{The Program Hypergraph: Formal Structure}

\subsection{From Binary Edges to Hyperedges}

A directed graph $G = (V, E)$ consists of a vertex set $V$ and an edge set $E$ where each edge $e$ is a pair $(u, v)$ of vertices. The PSG is a directed graph with annotated vertices and edges: each vertex carries type, dimension, and coeffect annotations; each edge carries a reachability bitvector and a dependency kind annotation.

A directed hypergraph $H = (V, F)$ generalizes this by replacing the edge set $E$ with a hyperedge set $F$, where each hyperedge $f$ is an ordered tuple $(S, t)$ consisting of a source set $S \subseteq V$ and a target $t \in V$. A binary edge is the special case where $|S| = 1$.

\begin{definition}[Program Hypergraph]
A Program Hypergraph is a tuple $\mathrm{PHG} = (V, F, \alpha, \beta)$ where:
\begin{itemize}[leftmargin=1.5em]
\item $V$ is a set of computation nodes, each carrying a type annotation $\tau(v)$, a dimension annotation $\delta(v) \in \mathbb{Z}^n$, a coeffect annotation $\kappa(v)$, and a three-state activation flag $\sigma(v) \in \{\mathrm{Live}, \mathrm{Latent}, \mathrm{Fresh}\}$.
\item $F$ is a set of directed hyperedges, each of the form $f = (S_f, t_f, \lambda_f)$ where $S_f \subseteq V$ is the source set, $t_f \in V$ is the target node, and $\lambda_f$ is the hyperedge annotation.
\item $\alpha : V \to \mathrm{Ann}_V$ is the vertex annotation function assigning dimensional, coeffect, and lifetime properties to each node.
\item $\beta : F \to \mathrm{Ann}_F$ is the hyperedge annotation function assigning relational semantics to each hyperedge.
\end{itemize}
The binary edge $(u, v)$ in the PSG is recovered as the special case $f = (\{u\}, v, \lambda)$. The hyperedge generalization is therefore backward-compatible: every valid PSG is a valid PHG with all hyperedges having $|S_f| = 1$.
\end{definition}

\subsection{Hyperedge Annotation Categories}

The hyperedge annotation $\lambda_f$ carries the following categories of information, extending the binary edge annotation vocabulary of the PSG.

\textbf{Arity and grade.} For geometric algebra operations, $\lambda_f$ records the grades of all source nodes and the derived grade of the target. For a hyperedge representing the outer product of grade-$p$ and grade-$q$ elements, $\lambda_f$ encodes the grade constraint $\grade{t_f} = \grade{s_1} + \grade{s_2}$. For a $k$-simplex join of $k{+}1$ vertices, $\lambda_f$ encodes the $k$-simplex type of the target, which follows from the grades of all $k{+}1$ source nodes jointly.

\textbf{Co-location constraint.} For spatial dataflow architectures, $\lambda_f$ may encode a co-location requirement: all nodes in $S_f$ must be mapped to the same tile, column, or memory region in the target architecture. This is a constraint over the source set that cannot be expressed as independent pairwise constraints without loss of semantics.

\textbf{Relational kind.} The kind field generalizes the binary edge's dependency kind to multi-way relations: geometric product, join, meet, co-location, transfer, synchronization barrier, or user-defined relational operator.

\textbf{Reachability bitvector.} A hyperedge carries a bitvector with one bit per configured target, indicating on which targets the hyperedge is active.

\subsection{Saturation Semantics for Hyperedges}

The PSG's saturation semantics extend to hyperedges with one modification: the inference rule for a hyperedge fires only when all source nodes in $S_f$ are elaborated, not just one. This is the correct generalization: the grade of a $k$-simplex is determined jointly by all $k{+}1$ vertex grades; a partial observation of $k$ of them is insufficient.

Formally, the saturation rule for $f = (S_f, t_f, \lambda_f)$ is:
\[
\text{If all } v \in S_f \text{ are Saturated, then } t_f \text{ may be elaborated using } \alpha(v)\text{ for all } v \in S_f \text{ jointly.}
\]
This is monotone over the activation lattice $\{\mathrm{Fresh} < \mathrm{Elaborated} < \mathrm{Saturated}\}$, so the standard fixpoint argument for PSG saturation applies. PHG saturation terminates in $O(|V| + |F|)$ iterations.

\subsection{The Binary Edge as Degenerate Hyperedge}

The transition from PSG to PHG is an embedding, not a replacement. Every PSG is a PHG where all hyperedges have $|S_f| = 1$. The existing DTS inference machinery, the DMM coeffect propagation, the three-state node model, and the language server protocol integration all apply without modification when restricted to binary hyperedges.

The information accrual framing from~\cite{dts-dmm} applies to hyperedges as it does to binary edges. As that work clarifies, the chain $I_{\text{source}} \subset I_{\text{PSG}} \subset \cdots \subset I_{\text{native}}$ is a structural property that holds by construction, but the relationship between information and decision quality is a design principle, not a theorem: set containment does not mechanically entail monotonic quality improvement, and a poorly designed pass at a later stage can make a worse decision despite having access to more context. The deferred-optimization principle is a commitment about compiler construction, namely that semantic work is concentrated in the PSG or PHG and that target-specific resolution is deferred to the stage at which target-specific context first becomes available. A multi-way geometric product constraint, once established as a hyperedge with full grade information for all sources, carries strictly more information to downstream optimization than the same constraint decomposed into a sequence of binary edges with intermediate nodes; the hyperedge formulation is what permits later stages to act on the joint constraint rather than on a reconstruction of it.

\section{Geometric Algebra as a PHG Domain}
\label{sec:ga}

\subsection{A Primer on Clifford Algebras}

Clifford algebra, also called geometric algebra when emphasizing its geometric interpretation, is a mathematical framework that unifies several classical structures: vectors, complex numbers, quaternions, and the exterior algebra of differential forms. The unification is not merely notational; it reveals that these apparently separate tools are all instances of the same underlying structure, parameterized by the dimension and metric signature of the space they operate over.

The construction begins with a real vector space $V$ of dimension $d$ and a quadratic form $Q: V \to \mathbb{R}$ that assigns a squared magnitude to each vector. The familiar cases are Euclidean space ($Q(v) > 0$ for all $v \neq 0$, giving the signature $\mathbb{R}^{d,0,0}$), Minkowski spacetime ($Q$ is positive for spatial directions and negative for the time direction, giving $\mathbb{R}^{1,3,0}$), and projective space ($Q$ is degenerate, with one null direction, giving $\mathbb{R}^{n,0,1}$). The Clifford algebra $\mathrm{Cl}(V, Q)$ is generated by extending $V$ with a product operation, the \emph{geometric product}, that satisfies:
\[
v^2 = Q(v) \quad \text{for all } v \in V
\]

From this single rule, all of the algebra's structure follows. When $Q(v) = 1$ for all unit vectors (Euclidean signature), squaring a vector gives the scalar 1. When $Q(v) = -1$ for certain basis directions (pseudo-Euclidean signature), squaring gives the scalar $-1$, reproducing the imaginary unit and generalizing complex numbers to higher dimensions. When $Q(v) = 0$ for a null direction (degenerate signature), squaring gives 0, enabling the projective and homogeneous coordinate representations used in computer graphics.

\textbf{Multivectors and grade.} The geometric product of two distinct basis vectors $e_i$ and $e_j$ (where $i \neq j$) produces a new object $e_i e_j$ that cannot be reduced to a scalar or a vector. This object is called a \emph{bivector}, and it represents an oriented planar area element. The product of three distinct basis vectors produces a \emph{trivector} (oriented volume element), and so on. The resulting objects are called \emph{multivectors}, and each is characterized by its \emph{grade}: the number of basis vector factors that compose it. Scalars have grade 0, vectors have grade 1, bivectors have grade 2, and so forth up to the \emph{pseudoscalar} of grade $d$, which is the product of all $d$ basis vectors and represents the oriented volume of the full space.

For a $d$-dimensional vector space, the Clifford algebra contains elements of every grade from 0 to $d$. The number of distinct basis elements at grade $k$ is $\binom{d}{k}$, giving a total algebra dimension of $\sum_{k=0}^d \binom{d}{k} = 2^d$. A general multivector is a linear combination of basis elements at all grades.

\textbf{The three products.} Three products of particular importance arise from the geometric product:

The \emph{outer product} (wedge product) $a \wedge b$ extracts the purely grade-increasing part of the geometric product: if $a$ has grade $p$ and $b$ has grade $q$, then $a \wedge b$ has grade $p + q$, or is zero if $p + q > d$. The outer product encodes geometric incidence: two vectors are parallel if and only if their outer product is zero.

The \emph{inner product} $a \cdot b$ extracts the purely grade-decreasing part: $a \cdot b$ has grade $|p - q|$. The inner product encodes metric relationships: the inner product of two grade-1 vectors recovers the familiar dot product.

The \emph{geometric product} $ab$ is the sum of the inner and outer products for grade-1 inputs, and generalizes both to higher grades. For a grade-$p$ element and a grade-$q$ element, the geometric product contributes to grades $|p - q|$, $|p - q| + 2$, \ldots, $p + q$. The geometric product is invertible for non-null elements, making it the algebraically richest of the three products and the basis for representing geometric transformations.

\textbf{Geometric objects and transformations.} In Plane-based Geometric Algebra (PGA, $\mathbb{R}^{3,0,1}$)~\cite{pga-guide}, the commonly used algebra for 3D Euclidean geometry, geometric objects have specific grade representations. Points are grade-3 elements (trivectors). Lines are grade-2 elements (bivectors). Planes are grade-1 elements (vectors). The pseudoscalar has grade 4. Euclidean transformations (rotations, translations, and their compositions) are represented as even-grade multivectors called \emph{rotors}, and the action of a transformation on a geometric object is computed as the \emph{sandwich product}: $X \mapsto R X \tilde{R}$, where $\tilde{R}$ denotes the reversal (conjugate) of $R$.

In Conformal Geometric Algebra (CGA, $\mathbb{R}^{4,1}$), the algebra is extended to also represent spheres and circles as grade-1 and grade-2 elements respectively, and inversions and M\"obius transformations become versor operations.

\textbf{Why this matters for compilation.} The grade of a multivector is not merely a descriptive label; it is a structural invariant that determines which algebraic operations are applicable, which product entries are non-zero, and what geometric meaning the result carries. A computation that applies the wrong-grade product to an input, or that loses track of the grade of an intermediate result, produces either a zero (if the grade exceeds the algebra dimension) or a result in the wrong grade subspace (which corresponds to a geometrically meaningless value).

These are the class of errors that the PHG's grade-aware type system detects at \emph{design time}. This distinction is worth making precise, because it differs from what most developers mean by ``compile time.'' In the Fidelity framework, the Composer compiler runs continuously as a language server through Lattice, and PSG elaboration (including grade inference, SMT-LIB2 proof discharge, and the full annotation saturation pipeline) executes as source is written, not when a build is explicitly invoked. A grade violation surfaces as a diagnostic in the development environment at the moment the offending expression is typed, before any build step occurs and before any execution is possible. Readers accustomed to Python, where the interpreter is effectively the compiler and errors are discovered at runtime, or to conventional compiled languages where ``compile time'' means a discrete build invocation, should interpret ``design time'' throughout this paper as meaning: \emph{while the engineer is writing code, as a continuous background service, with no execution required}. This is the capability that makes the PHG's structural guarantees practically useful rather than theoretically available.

\subsection{The Graded Structure of Clifford Algebras}

A Clifford algebra $\mathrm{Cl}(V, Q)$ is generated by a vector space $V$ with a quadratic form $Q$. The algebra is graded: each element decomposes into components of grade $k$, where grade $k$ means membership in the span of all $k$-fold products of basis vectors. For a $d$-dimensional vector space, grades run from 0 (scalars) to $d$ (pseudoscalars), and the total algebra dimension is $2^d$.

The graded structure is an abelian group under the outer product. The outer product of a grade-$p$ element and a grade-$q$ element produces a grade-$(p{+}q)$ element, or zero if $p{+}q > d$. This is exactly the additive structure of the DTS dimensional group $\mathbb{Z}^n$: grade is a non-negative integer exponent, the outer product maps to addition, and the dimension bound maps to the constraint that exponents cannot exceed the algebra dimension.

\begin{proposition}[Grade is a DTS Dimension Axis under the Outer Product]\label{prop:grade-dim-axis}
Restricted to the outer product, the grade of a Clifford algebra element satisfies the axioms of a DTS dimension: it is an element of a finitely generated abelian group $(\mathbb{Z}, +)$, it is preserved under the outer product (addition), and its consistency is decidable in $O(1)$ per operation. The DTS inference algorithm of Section~2.2 of~\cite{dts-dmm} applies to outer-product grade inference without modification, with grade substituting for a physical dimension in the constraint system.
\end{proposition}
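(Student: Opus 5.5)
The plan is to verify the three listed axioms in turn, then argue that the DTS inference algorithm only ever uses those axioms. A preliminary step fixes what object carries the grade annotation. A general multivector is inhomogeneous, so the proposition can only be about \emph{homogeneous} elements, those lying in a single grade-$k$ subspace. I will therefore define the grade annotation on a node $v$ as $\grade{v} = k \in \mathbb{Z}$ for homogeneous $v$. This is a one-generator sublattice of the DTS group $\mathbb{Z}^n$: adjoin one new axis, so that $\mathbb{Z}^{n}$ becomes $\mathbb{Z}^{n+1}$, where physical dimensions occupy the first $n$ coordinates and grade occupies the last. $(\mathbb{Z},+)$ is finitely generated by $1$, which gives the first axiom.

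For preservation I will invoke the standard fact recalled in the primer. If $a \in \bigwedge^p V$ and $b \in \bigwedge^q V$, then $a \wedge b \in \bigwedge^{p+q} V$, and it is $0$ when $p+q>d$. Hence every outer-product hyperedge $f$ carries exactly the linear constraint $\grade{t_f} = \grade{s_1} + \grade{s_2}$, which is the annotation $\lambda_f$ described earlier. For a $k$-ary outer product I will use associativity of $\wedge$ and induct to get $\grade{t_f} = \sum_{s \in S_f} \grade{s}$. Decidability in $O(1)$ per operation is then immediate. Checking a constraint means one integer addition, one equality test, and one comparison against the fixed bound $d$, and all of these are constant time for fixed $d$.

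The substantive step is the transfer claim that the DTS inference of Section~2.2 of~\cite{dts-dmm} applies without modification. The approach is to observe that this algorithm generates linear equations over an abelian group whose exponents are unknowns, and solves them by Gaussian or Smith-normal-form unification over $\mathbb{Z}$, in the manner of Kennedy's units-of-measure inference~\cite{kennedy2009}. The argument is then to show that outer-product grade constraints are syntactically of the same shape: linear equations on the new coordinate, with integer coefficients, and no interaction with the other axes. The existing unifier therefore produces most general solutions on the extended group unchanged. Any inferred grade is then checked against the side condition $0 \le \grade{v} \le d$, either as a post-pass or as an SMT-LIB2 bound discharged by Z3.

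I expect this transfer step to be the main obstacle, because grade is not really a free abelian group element. It lives in the truncated monoid $\{0,\dots,d\}$ with absorbing zero, and negative grades have no meaning. So the claim "without modification" must be read carefully. The group-theoretic unification runs unmodified on $\mathbb{Z}$, but soundness for Clifford elements additionally needs the bound check. The check also has a semantic consequence: an out-of-range inferred grade signals that the result is identically zero, not that a type error occurred. I would state this as a lemma, namely that a solution in $\mathbb{Z}$ satisfying the bounds corresponds to a well-typed outer-product expression. I would also note explicitly that the inner and geometric products fall outside the proposition, since they produce $|p-q|$ terms or multi-grade outputs and so are not additive.
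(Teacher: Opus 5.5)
Your proposal is correct and follows essentially the paper's argument. The paper likewise treats the outer-product rule $\grade{a \wedge b} = \grade{a} + \grade{b}$ as an additive constraint of the same shape as unit multiplication, and discharges the bound $p+q \le d$ as a bounded-domain (SMT) side condition, both decidable by integer arithmetic. Your additional care makes explicit several points the paper's three-line proof leaves implicit: restricting to homogeneous elements, adding the lower bound $0 \le \grade{v}$, and noting that a grade exceeding $d$ signals an identically zero result rather than an ill-typed one (a negative inferred grade, by contrast, remains a genuine error).
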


\textit{Proof.} The outer product grade rule $\grade{a \wedge b} = \grade{a} + \grade{b}$ is a dimensional addition constraint identical to multiplication of physical units. The constraint that $p + q \leq d$ is a bounded domain constraint identical to the SMT sort constraints used for memory dimensions. Both are decidable by integer comparison. \hfill$\square$

\medskip

The restriction to the outer product is load-bearing. The geometric product is not a dimensional addition; its grade output is set-valued, $\grade{c} \in \{|p-q|, |p-q|+2, \ldots, p+q\}$, as stated in Section~\ref{sec:gp-sparsity} below. Set-valued constraints are a strict extension of the DTS's equality constraints over $\mathbb{Z}^n$. Proposition~\ref{prop:grade-dim-axis} does not cover the geometric product, and Section~\ref{sec:gp-sparsity} treats the geometric product case by propagating the set of non-zero grades as a derived annotation, refined to a specific grade when the output's use context constrains it. A full integration of set-valued grade constraints into the DTS constraint system is identified as future work in Section~\ref{sec:grade-decidability}.

\subsection{Geometric Product Sparsity from Grade Inference}\label{sec:gp-sparsity}

The geometric product of two multivectors decomposes as the sum of the inner product (grade $|p - q|$) and the outer product (grade $p + q$) components. For a general multivector in $d$ dimensions, the Cayley table of the geometric product is a $2^d \times 2^d$ matrix specifying how each basis element of the first operand interacts with each basis element of the second.

The critical observation is that the majority of Cayley table entries are structurally zero, determined entirely by the grade structure of the operands. The Flash Clifford implementation~\cite{flash-clifford} quantifies this precisely: the naive dense Cayley table implementation is 85\% sparse in 2D and 95\% sparse in 3D.

Formally, let $f = (\{a, b\}, c, \lambda_{\mathrm{gp}})$ be a geometric product hyperedge where $\grade{a} = p$ and $\grade{b} = q$. The grade inference rule for the geometric product is:
\[
\grade{c} \in \{|p - q|,\; |p - q| + 2,\; \ldots,\; p + q\}
\]
with specific grades present or absent determined by the metric signature of the algebra. For the common cases (PGA, CGA, STA), the non-zero grade contributions are fully determined by the algebra specification and the operand grades.

The practical consequence for MLIR emission: the code generation traversal observes the grade constraint on each geometric product hyperedge and selects the appropriate sparse implementation. For a grade-1 times grade-2 product in 3D PGA, the emitted MLIR computes 8 multiplications and 4 additions (the non-zero entries) rather than 96 multiplications and 80 additions (the dense Cayley table). This is a compile-time selection driven by grade annotations in the PHG, performed before any MLIR optimization pass runs.

\subsection{The $k$-Simplex as a Native PHG Hyperedge}

De Keninck, Roelfs, Dorst, and Eelbode~\cite{cleanup-mesh} demonstrate that in Plane-based Geometric Algebra, $k$-simplices and $k$-complexes can be written compactly as joins of vertices. A $k$-simplex is the outer product of $k{+}1$ grade-1 elements (points in the conformal or projective model). This is a $(k{+}1)$-to-1 relation: a triangle is the join of three points, a tetrahedron the join of four.

In the binary PSG, the triangle join must be decomposed into two sequential binary outer products: first join two points to get an edge (grade 2), then join the edge with the third point to get the triangle (grade 3). This introduces an intermediate node that has no independent geometric meaning in the context of the triangle construction.

In the PHG, the triangle join is a single hyperedge $f = (\{p_1, p_2, p_3\},\; \mathrm{triangle},\; \lambda_\wedge)$ where $\grade{p_i} = 1$ for each $i$ and $\grade{\mathrm{triangle}} = 3$. The grade constraint fires only when all three source nodes are available, and the geometric identity that the join of three points is a plane is encoded as a single algebraic fact in the hyperedge annotation.

\begin{proposition}[BSP Predicate Exactness from PHG Grade Inference]
Let $p$ be a grade-1 node and $\pi$ be a grade-3 node in a PGA PHG. The meet hyperedge $f = (\{p, \pi\},\; \mathrm{result},\; \lambda_\vee)$ produces an output of grade 0 (scalar) by the standard regressive-product grade rule $p+q-n = 1+3-4 = 0$ in $\mathbb{R}^{3,0,1}$; the output grade is 0 in both the incident and non-incident cases. Incidence is determined by the value of that scalar: it is zero if and only if the two elements are incident. The PHG's contribution is therefore not that the incidence test avoids numerical comparison; comparing a grade-0 scalar to zero remains a value comparison, and this caveat qualifies the claim that PHG grade inference eliminates threshold-based testing. The contribution is that the scalar under comparison is the direct algebraic output of a single regressive-product hyperedge, derived in one step from the source grade annotations, not accumulated through a sequence of binary operations over intermediate multivector nodes each carrying its own floating-point residue. The zero test in the PHG formulation is against one algebraically well-defined scalar; the equivalent test in a binary PSG decomposition is against a value produced by a chain of intermediate products, with correspondingly weaker bounds on the residue at the comparison point.
\end{proposition}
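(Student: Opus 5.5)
The proposition makes three claims, and the proof takes them in order: a grade computation, an incidence characterization, and a comparative claim about error. \textbf{Grade.} In $\mathbb{R}^{3,0,1}$ we have $n=4$, and the regressive product is $a \vee b = J^{-1}\bigl(J(a)\wedge J(b)\bigr)$, where $J$ is the (Hodge or Poincar\'e) dual sending grade $k$ to grade $n-k$. For $p$ of grade $1$ and $\pi$ of grade $3$, the duals have grades $3$ and $1$. By Proposition~\ref{prop:grade-dim-axis}, their wedge has grade $4 \le d$, and $J^{-1}$ returns grade $0$. The hyperedge annotation $\lambda_\vee$ therefore carries the single grade rule $\grade{\mathrm{result}} = \grade{p}+\grade{\pi}-n$. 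This is an affine equality constraint over $\mathbb{Z}$, so it sits inside the DTS equality fragment, unlike the set-valued geometric-product rule of Section~\ref{sec:gp-sparsity}. Because the rule uses only the source grades and not their values, the output grade is $0$ whether or not the elements are incident. The saturation rule fires once both sources are Saturated, which gives the one-step derivation.

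\textbf{Incidence.} Write $J(p) = P$ (grade $3$) and $J(\pi) = \Pi$ (grade $1$). Their wedge is a multiple of the pseudoscalar $I$, with coefficient $c$. Choose a basis in which $P$ and $\Pi$ have components $P_i$ and $\Pi_j$. Then $c = \sum_i \pm P_i \Pi_i$, which is a nondegenerate bilinear pairing between the grade-$1$ and grade-$3$ subspaces. Note that this pairing uses the wedge and duality, not the degenerate metric, so the null direction $e_0$ causes no trouble. Under the PGA reading, one of the two elements represents a plane and the other a point. The pairing evaluates the plane's linear equation at the point's homogeneous coordinates, so $c = 0$ exactly when the point lies on the plane. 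One caveat needs care here. The statement's labeling (grade-$1$ $p$, grade-$3$ $\pi$) is the reverse of the PGA convention in the primer, where points are grade-$3$ and planes grade-$1$. The pairing argument is symmetric in the two roles, so I will note the relabeling and proceed with the argument unchanged. The ``if and only if'' then follows from the standard point--plane incidence equation.

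\textbf{Error comparison.} This is the main obstacle, because ``correspondingly weaker bounds'' is informal. I would make it precise with standard floating-point forward error analysis. In the single-hyperedge formulation, the scalar $c$ is an inner product of length $4$ computed directly from the source coefficients, with error at most $\gamma_4 \sum_i |P_i \Pi_i|$, where $\gamma_m = mu/(1-mu)$ and $u$ is the unit roundoff. In a binary decomposition, say dualize, then wedge through intermediate multivectors, then undualize, each stage materializes rounded intermediate coefficients. The composed bound is then $\gamma_{m}$ with $m$ equal to the total accumulation depth, which is at least that of the direct formula, and strictly larger when the intermediates carry more than one rounding.

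I expect the honest form of this last claim to be an upper-bound comparison rather than a theorem that realized error is always smaller. I would state it that way, consistent with the paper's earlier caveat that the accrual of information is a design principle and not a theorem.
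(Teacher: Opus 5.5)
Your grade and incidence arguments are correct, and they are the paper's proof made explicit. The paper quotes the regressive-product rule $p+q-n$ and attributes the vanishing of the scalar to the Pl\"ucker relations. You instead derive the grade from $J^{-1}(J(a)\wedge J(b))$, and derive incidence from the nondegenerate complement pairing between grades $1$ and $3$. For a point and a plane this is just the linear incidence equation, so no Pl\"ucker relation is needed. You are also right on two side points: $J$ must be the non-metric complement, since $I^2=0$ in $\mathbb{R}^{3,0,1}$; and the statement's labels invert the primer's convention, which is harmless because the pairing is symmetric.

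The gap is in the error comparison. The paper's proof does not attempt it; it ends by separating the compile-time grade fact from the runtime zero test. Your argument does not establish it either.
\begin{itemize}
\item \textbf{Your decomposition adds no rounding.} $J$ only permutes coefficients and flips signs. So ``dualize, wedge, undualize'' performs the same length-$4$ inner product as the direct pairing, with the same $\gamma_4$ bound, and your ``strictly larger'' branch never fires.
\item \textbf{There is no chain to compare against.} This hyperedge has only two sources, so its PSG counterpart is two edges into one node, with no intermediate product. The paper's introduction itself notes that a product of two operands already fits the binary model.
\item \textbf{An upstream chain does not rescue the inequality.} A genuine chain exists only when a source is itself built from three others: a point as the meet of three planes, or a plane as the join of three points. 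Then your $\gamma_4$ bound, taken relative to the already-rounded coefficients of $p$ and $\pi$, is not comparable with a bound for the whole chain.
\item \textbf{The bound depends on evaluation order.} In that upstream case the fused scalar is a $4\times4$ determinant. Evaluated by cofactor expansion, it performs exactly the chain's operations, with worst-case error $\gamma_9\sum_k|t_k|$ over its $24$ expanded terms $t_k$. That beats the fully expanded determinant summed recursively ($\gamma_{26}\sum_k|t_k|$) and loses to the same expansion summed pairwise ($\gamma_8\sum_k|t_k|$).
\end{itemize}
So whether the ``direct'' scalar carries a smaller residue bound is decided by the emitted evaluation schedule, not by hyperedges versus binary edges. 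It cannot be proved at the level of this statement. The provable content here is the grade and incidence claims. The residue claim should be dropped, or restated as a conditional about a specific code-generation schedule.
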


\textit{Proof.} The standard grade rule for the regressive product in Clifford algebra states that for a grade-$p$ element and a grade-$q$ element in an $n$-dimensional algebra, the regressive product produces an element of grade $p+q-n$ when that quantity is non-negative. In $\mathbb{R}^{3,0,1}$ with $p = 1$ and $q = 3$, this yields grade $0$ regardless of incidence. The vanishing of the resulting scalar under incidence is the PGA expression of the Pl\"ucker relations, not a grade shift. The grade annotation on the result is therefore a compile-time fact about the hyperedge's algebraic structure; the zero-value test is a runtime fact about a single derived scalar.\hfill$\square$

\subsection{Assessment of the Existing Geometric Algebra Library Ecosystem}

The bivector.net library catalog~\cite{bivector} provides a representative survey of production geometric algebra implementations. Every library works around the absence of grade as a first-class type-level property through one of three mechanisms.

\textbf{Runtime specialization.} Kingdon, the Python Clifford library, Ganja.js, and GAmphetamine generate product code on first use, paying the symbolic analysis cost at runtime. GAmphetamine (De Keninck's successor to Ganja.js) performs coefficient-level symbolic optimization with CSE and variable prefetching, generating lazy per-type-pair implementations. The time-to-first-product for $\mathbb{R}^3$ is approximately 2.9\,ms in the default non-precompiled mode; this cost moves to compile time and vanishes from the runtime budget in the Fidelity framework.

\textbf{Manual compile-time specialization.} Klein~\cite{klein} provides a hand-optimized SSE implementation of 3D PGA with no runtime overhead, achieved by fixing the algebra ($\mathbb{R}^{3,0,1}$), the dimension (4D), and the instruction set (x86 SSE), and manually working out every non-zero product entry. Klein is production-ready. Its limitation is architectural: adding a new algebra requires re-deriving and hand-coding every product, and the library provides no path to non-SSE targets.

\textbf{Template-driven compile-time specialization.} Versor~\cite{versor} uses C++ template metaprogramming to derive product implementations at compile time. The grade of each multivector type is encoded in the template parameter, and the compiler instantiates product implementations specialized for each grade combination. The limitation relative to the PHG is that Versor's grade information exists in the C++ type system but is not visible to a language server, not integrated with memory placement analysis, not used for representation selection on heterogeneous targets, and not composable with physical dimensional constraints.

\textbf{Optimizer-based specialization.} GAALOP converts symbolic geometric algebra expressions into optimized coefficient-level code for a large number of target languages, including C, C++, C\#, CUDA, Rust, and Julia. GAALOP's architecture is the closest existing system to what the PHG does for geometric algebra: it performs the sparsity analysis and emits specialized code per algebra and target. The critical difference is that GAALOP is a code generation tool, not a compilation pipeline component. Its output loses algebraic provenance; the grade constraints that drove GAALOP's optimization are not preserved as annotations in the emitted code.

The common gap across all libraries is that none provide grade as a first-class type system property that survives into a compiler's semantic graph, participates in coeffect inference, influences memory placement and representation selection, and is exposed through a language server protocol.

\textbf{Rust and statically typed languages.} Rust's type system supports const generics, which in principle allow grade to be encoded as a type parameter, but the existing tooling does not build this out into a full grade-aware type system with inference, dimensional polymorphism, or language server integration. The Rust borrow checker provides strong lifetime guarantees but has no concept of geometric grade, no mechanism for dimensional constraint propagation, and no path to the multi-target representation selection described in~\cite{dts-dmm}. The PHG framework extends this to geometric correctness at the grade level, with both properties jointly verifiable in the same semantic graph.

\section{Spatial Dataflow Architectures and Co-location Constraints}

\subsection{The Tile Mapping Problem}

The AMD XDNA~2 NPU arranges AI Engine tiles in a two-dimensional grid. Each tile contains a VLIW processor, local data memory (typically 32\,KB), and configurable stream-switch interconnect~\cite{rico2024}. Computations are mapped to tiles; data is moved between tiles via DMA over a configurable interconnect fabric. The mapping problem has two coupled components: tile assignment and route configuration.

The tile assignment problem exposes a constraint that is irreducibly multi-way. Consider a matrix multiplication partitioned into four tiles: tile $A$ loads a column of the left matrix, tiles $B$ and $C$ perform partial dot products, and tile $D$ reduces and stores the result. The constraint that tiles $A$, $B$, $C$, and $D$ form a coherent pipeline is not decomposable into independent pairwise constraints. In the PHG, this pipeline constraint is a single hyperedge:
\[
f = \bigl(\{A, B, C, D\},\; \mathrm{result},\; \lambda_{\mathrm{tile}}\bigr)
\]
where $\lambda_{\mathrm{tile}}$ encodes the co-location requirement, the routing topology, and the synchronization barrier at $D$.

\subsection{Hyperedge Partitioning: CIRCT for FPGAs and MLIR-AIE for NPUs}

The co-location hyperedge formalism applies to both of the Fidelity framework's spatial compilation paths, but the two paths are architecturally distinct.

The CIRCT lowering path targets FPGA synthesis. An FPGA's programmable fabric is a spatial resource: lookup tables, DSP slices, block RAMs, and routing fabric are distributed across a fixed die topology. A PHG co-location hyperedge encoding a group of operations that must share a DSP cascade, or that must fit within a single clock region to meet timing, is exactly the kind of multi-way placement constraint that VLSI hypergraph partitioning algorithms consume natively~\cite{karypis2000}. The Fidelity framework's CIRCT lowering pass can invoke these algorithms using PHG co-location hyperedges as partition constraints. The output feeds into vendor synthesis tools (Vivado for Xilinx, Quartus for Intel) as placement directives.

The MLIR-AIE lowering path targets the XDNA~2 NPU's AI Engine tile array~\cite{amd-aie}. Unlike FPGA fabric, AI Engine tiles are a fixed architecture: VLIW processors arranged in a two-dimensional grid with explicit, programmer-managed DMA-based data movement. The co-location constraint here is tile assignment: which operations run on which tiles, which DMA channels transfer data between them, and which synchronization barriers govern the pipeline. The MLIR-AIE lowering pass consumes the hyperedge annotation and emits the corresponding tile kernel code, DMA buffer descriptors, and synchronization primitives in the AIE dialect, which the \texttt{aiecc} toolchain then compiles to the NPU binary.

Both paths share the same PHG hyperedge vocabulary at the compiler's middle-end. The difference is in what the co-location constraint resolves to at the target boundary: fabric placement directives for CIRCT, and tile kernel assignments with DMA routing for MLIR-AIE.

\subsection{GPU Warp-Level Parallelism and the Grade-Aligned Kernel}

GPU computation is organized into warps (groups of 32 threads executing in lockstep) and thread blocks (groups of warps sharing shared memory). The Flash Clifford implementation~\cite{flash-clifford} exploits this by structuring the geometric product computation as a fused kernel where the multivector dimension axis aligns with the warp dimension: the $(\mathrm{MV\_DIM}, \mathrm{BATCH\_SIZE}, \mathrm{NUM\_FEATURES})$ memory layout allows the linear layer to be expressed as a batch matrix multiply with the $\mathrm{MV\_DIM}$ axis mapping directly to warp lanes.

This layout optimization is a consequence of grade structure: the multivector's $2^d$ components, organized by grade, form a natural partition that maps to the warp dimension when $d$ is small ($d = 2$ gives 4 components, $d = 3$ gives 8, $d = 4$ gives 16). The PHG grade inference that determines this component count at compile time also determines whether the warp-aligned layout is applicable and what the optimal tiling strategy is for a given algebra dimension and batch size.

The GPU lowering path from PHG-annotated Clifford dialect operations is not singular. The Triton path provides a tile-based GPU programming model operating above the level of raw PTX or LLVM IR. Flash Clifford's Triton implementation demonstrates that the fused GELU-product-RMSNorm kernel maps well to Triton's blocked execution model, with the grade axis determining the inner loop structure and the batch axis mapping to the outer tile loop. The MLIR GPU dialect path to PTX for NVIDIA targets provides a structured path from MLIR operations to PTX without routing through the full LLVM pipeline. The MLIR-AIE path for AMD's integrated GPU architecture enables the AI Engine array and RDNA GPU cores to cooperate on workloads that benefit from mixed execution. The PHG's per-target reachability bitvector handles this partitioning: operations annotated as AI Engine-reachable route to the MLIR-AIE path; operations annotated as GPU-reachable route to the MLIR GPU or Triton path.

\section{Physics-Aware Computation}

\subsection{Physics-Informed Neural Networks}

Physics-informed neural networks (PINNs)~\cite{raissi2019} encode physical laws as differentiable loss terms. The DTS framework already provides the mechanism for verifying dimensional consistency of loss terms at compile time. The PHG extends this in two ways specific to physics-aware computation.

First, the gradient of a PINN loss function involves both scalar neural network activations and physical field quantities. The chain rule applied across the physics residual term produces gradients with mixed dimensional character: the gradient of a force residual with respect to a position parameter carries dimension $\mathrm{N}/\mathrm{m} = \mathrm{kg}/\mathrm{s}^2$. In the PHG, the auto-differentiation graph is a hypergraph because the chain rule applied to multi-input operations (such as the Navier-Stokes momentum equation) produces gradient contributions that are multi-way relations over the same operands as the forward computation.

Second, PINNs over geometric domains require that the spatial operators in the physics residual respect the geometric structure of the domain. In PGA or CGA, the surface metric is encoded in the algebra's quadratic form $Q$. The PHG's grade-aware type system can verify at compile time that the operators applied to surface field quantities are grade-preserving with respect to the algebra's metric.

\subsection{Forward-Mode Automatic Differentiation and the Quire}
\label{sec:phg-forward-mode-ad}

Section~6.9 of~\cite{dts-dmm} established that forward-mode automatic differentiation~\cite{baydin2022} has a specific coeffect signature: no activation tape, $O(1)$ auxiliary memory per layer, with the inner product in the directional derivative computable exactly via the quire accumulator. The PHG extends this analysis to geometric algebra computations.

The directional derivative of a geometric algebra function $f: \mathrm{Cl}(V,Q) \to \mathrm{Cl}(V,Q)$ in direction $v$ is:
\[
\nabla_v f(x) = \lim_{t \to 0} \frac{f(x + tv) - f(x)}{t}
\]
For polynomial functions in the algebra, this derivative has a closed-form expression in terms of the same algebraic operations as the forward computation. It does not require a backward pass; it is computed by a single forward pass through the computation graph with dual numbers (multivectors augmented with a tangent component). The quire accumulator semantics from Section~3.5 of~\cite{dts-dmm} apply to the inner product in the tangent component: the directional derivative involves a dot product between the algebra's basis and the perturbation direction, and the quire provides exact accumulation for this dot product.

The combination of these properties has consequences that extend beyond the immediate geometric algebra context and warrant explicit statement. The elimination of the activation tape means that training requires no more memory than inference: a device capable of running a model forward can now also train it, without any additional allocation proportional to model depth. This is a category change in where and how training can occur, not an incremental efficiency improvement.

For cluster-scale training, the relevant architecture is Mixture of Experts (MoE)~\cite{shazeer2017}. In a sparse MoE model, $N$ expert subnetworks exist but only $k$ are activated per input token, typically $k = 1$ or $k = 2$. The standard argument against forward-mode at scale (that per-parameter forward passes require more steps than reverse-mode for high-dimensional parameter spaces) does not apply to the active expert subgraph, whose parameter dimension is $P_{\text{expert}} \times k$ rather than the full model parameter count $P$. When $P_{\text{expert}}$ is compact by design (the total parameter count comes from having many experts, not from each expert being large), the forward-mode gradient estimate is tractable with acceptable variance per expert, and the no-tape property applies to each active expert independently.

The MoE routing decision itself does not exhibit the irreducibility of the co-location or $k$-simplex join cases. The selection of $k$ experts from $N$ candidates decomposes naturally into $N$ independent score computations followed by a top-$k$ reduction; the cardinality constraint "exactly $k$ experts are selected" is a property of the output set, not a joint structural constraint on the input set in the sense that "all four operations share one BRAM block" is a joint constraint. We therefore treat MoE routing not as a structural application of the hyperedge formalism, but as a diagnostic application. Expert collapse, the failure mode where the router consistently activates the same experts and the remainder atrophy, is a set-level property of the routing history: it concerns which expert nodes are selected as outputs of the routing hyperedge across many input tokens. In the PHG, a routing hyperedge records this history as a saturation pattern over the expert node set, and nodes that are never selected over a sufficient window are flagged. The language server exposes this as a design-time diagnostic during training, surfacing the failure mode before it has progressed to model degradation. This diagnostic role is the specific contribution of the PHG to MoE analysis; the per-token routing arithmetic is served as well by the existing binary graph representation.

Distributed gradient accumulation across workers in cluster training is an accumulation of dot products, precisely the operation the quire makes exact. Current practice uses bfloat16 or float32 accumulation and relies on statistical averaging across workers to manage rounding error. With quire-backed accumulation, the gradient reduction across workers is exact regardless of worker count, rounded once at the end of the accumulation, not once per worker communication step. For training runs spanning thousands of accelerators and hundreds of thousands of steps, this eliminates a systematic source of gradient noise that current frameworks treat as irreducible. The DTS additionally verifies at design time that the gradient of each expert's loss with respect to its parameters carries the correct dimensional annotations: in a physics-informed MoE where each expert specializes in a distinct physical domain, a gradient update dimensionally inconsistent with that expert's domain is a design-time error rather than a silent convergence anomaly.

\subsection{Mesh Topology and the Manifold Constraint}

The ``Clean up your Mesh'' paper~\cite{cleanup-mesh} demonstrates that $k$-simplices and $k$-complexes have compact, coordinate-free representations in PGA. The topological consistency of a mesh, specifically the requirement that faces share edges, edges share vertices, and the resulting complex is a valid manifold, is a constraint over sets of simplices, not over individual pairs.

In binary graph terms, topological consistency requires asserting the equality of shared boundary elements: if faces $F_1$ and $F_2$ share edge $E$, then the edge component of $F_1$ and the edge component of $F_2$ must be identical multivectors. In the PHG, this is a three-way constraint: a hyperedge connecting $F_1$, $F_2$, and $E$ asserts that $E$ is the shared boundary as a single relational fact. The BSP correctness problem, where a floating-point comparison against zero determines side-of-plane membership, is resolved in the PHG not by increasing precision but by restructuring the representation so that the relevant equalities are structural (pointer equality in the implementation), not numerical.

\subsection{High-Performance Computing and the MFEM Connection}

The MFEM finite element library provides a framework for high-order PDE solvers on unstructured meshes using the standard coordinate-based approach: vertices are float64 arrays, elements are integer index arrays, and geometric operations require explicit coordinate extraction and floating-point arithmetic. The PHG framework provides a path to a MFEM-compatible representation that defers coordinate extraction. The mesh topology is represented as a hypergraph of simplex nodes with grade-annotated PGA multivectors; geometric operations on the mesh are computed from the PGA algebra without coordinate extraction; and coordinates are extracted only when quadrature routines require them, at which point the representation selection function of~\cite{dts-dmm} can choose IEEE~754, posit, or fixed-point representations appropriate to the dimensional range of the values.

\section{The PHG as Design-Time Resource}

\subsection{Extended Language Server Protocol}

The PSG's language server integration exposes dimensional resolution, memory placement, escape analysis diagnostics, and cross-target transfer fidelity as design-time views. The term ``design time'' is used deliberately throughout this paper to distinguish the Fidelity framework's feedback model from both runtime detection (errors found during execution) and conventional compile-time detection (errors found when a build is explicitly invoked). In the Fidelity framework, Lattice runs the Composer compiler as a persistent language server process. PSG elaboration, PHG saturation, and SMT-LIB2 proof discharge execute incrementally as source is edited, surfacing diagnostics in the development environment as the engineer writes. The feedback is continuous, not gated on a build step.

This matters for the PHG's geometric algebra diagnostics in a specific way. A grade violation in a Clifford algebra computation (using a grade-2 bivector where a grade-1 vector is required, for example) is not a runtime failure that occurs when the program executes against specific data. It is a structural property of the source code that the PHG's saturation semantics detect unconditionally, regardless of what data the program would process at runtime. The diagnosis is available at the moment the offending expression is typed. The PHG extends this design-time view with the following geometric algebra-specific annotations.

\textbf{Grade resolution display.} Every multivector-typed node carries its resolved grade in the PHG. The language server renders this as inline annotations, showing the grade of each intermediate result. For a sequence of PGA operations, the engineer sees which operations produce scalars, vectors, bivectors, trivectors, and pseudoscalars without any manual grade tracking.

\textbf{Sparsity profile.} For each geometric product hyperedge, the language server displays the number of non-zero Cayley table entries for the specific grade combination, the number of multiplications and additions in the sparse implementation versus the dense implementation, and the percentage reduction. If an engineer uses a general multivector where a specific grade would suffice, the language server shows the performance cost of the grade ambiguity.

\textbf{Topological consistency diagnostics.} For mesh computation, the language server verifies that all boundary consistency constraints are encoded as PHG hyperedges. Meshes constructed without explicit boundary hyperedges are flagged as potentially topologically inconsistent.

\textbf{Co-location compatibility matrix.} For spatial dataflow targets, the language server displays the per-target co-location feasibility for each hyperedge: which targets support the required tile configuration, which targets require decomposition into smaller subgraphs, and what the routing topology looks like on each target.

\subsection{Design-Time Physics Verification}

For physics-informed computation, the PHG enables design-time verification that is not possible with the binary PSG. The combination of DTS dimensional constraints and PHG grade constraints allows the compiler to verify at design time that:

\begin{itemize}[leftmargin=1.5em]
\item Physics loss terms are dimensionally well-formed (DTS verification, established in~\cite{dts-dmm}).
\item Geometric operators applied to field quantities preserve grade (PHG grade verification).
\item Boundary conditions encoded in the mesh topology are topologically consistent (PHG hyperedge saturation).
\item Numeric representations selected for field quantities provide adequate precision for the expected value distribution in each mesh element (representation selection, extended from~\cite{dts-dmm} with geometric domain knowledge).
\end{itemize}

Each verification is decidable and operates at design time: DTS dimensional consistency reduces to abelian-group unification over $\mathbb{Z}^n$; PHG grade consistency reduces to integer arithmetic; topological consistency reduces to hyperedge saturation; representation selection is a compile-time function over dimensional ranges and target capabilities.

\subsection{Engineering Consequences}

The properties described in this paper address a class of error that practitioners in high-performance computing, geometric machine learning, and embedded systems engineering encounter repeatedly, often without a systematic solution.

\textbf{Grade ambiguity and the cost of overgeneralization.} The sparsity profiles reported in Section~\ref{sec:ga} are not edge cases. A geometric product over a general multivector in 3D PGA, where no grade information is available, requires evaluating all 256 entries of the $16 \times 16$ Cayley table, 95\% of which are structurally zero. An engineer who uses a general-purpose multivector type for a computation that always operates on grade-1 inputs pays a 20$\times$ arithmetic overhead that is invisible at the source level. The language server's sparsity display makes this cost visible at the point where the type is declared, not after profiling has identified a performance regression downstream.

\textbf{Topological consistency as a compile-time property.} Mesh-based computations in finite element analysis, geometry processing, and physics simulation routinely fail at runtime due to topological inconsistencies: T-junctions where edges should be shared, missing faces in a watertight model, numerical drift that causes coincident vertices to separate. The PHG hyperedge encoding of boundary consistency constraints elevates these from runtime numerical accidents to compile-time structural properties. A mesh that is topologically inconsistent is a program that the PHG cannot saturate correctly.

\textbf{Cross-target fidelity as a first-class design property.} Systems that transfer numeric results between hardware targets routinely lose precision at the boundary in ways that are not tracked or reported. The DTS cross-target transfer analysis, extended to the PHG context, makes these boundary semantics explicit: the language server reports what precision is lost or transformed at each hardware boundary. The engineer can make an informed decision about whether to rescale, promote, or accept the conversion at that point.

\subsection{Provability Boundaries of the PHG Compute Graph}

A compute graph whose structural properties can be proven at design time, continuously, as source is written, with bounded running time and no execution required, occupies a different engineering category from one whose properties are merely tested at runtime or verified with heuristic confidence. The proofs described in this section are generated and discharged by Lattice's background elaboration process. They are not post-hoc audits; they are running ahead of the engineer, certifying each expression as it is added to the source.

Developers accustomed to languages where correctness is established by testing should note that these are not tests: they are mathematical proofs over all possible inputs, produced as a byproduct of the same elaboration pass that generates the type annotations visible in the editor. A grade inconsistency, a dimensional mismatch in a physics loss term, a tile co-location constraint that exceeds the target's resources: each of these surfaces as a diagnostic in the development environment at the moment the offending expression is typed, not when a build is invoked and not when the program runs against data. The distinction between ``compile time'' and ``design time'' in the Fidelity framework is this: compilation is a discrete step that produces an artifact; design-time elaboration is a continuous service that keeps the engineer informed. Both invoke the same underlying constraint machinery; they differ in when and how the results are presented.

\textbf{What the framework proves, unconditionally.} Dimensional consistency of the PHG is decidable in polynomial time: constraint generation produces a system of linear equations over $\mathbb{Z}^n$; abelian-group unification solves it and either certifies consistency or reports the first inconsistent constraint. Grade consistency of the PHG is decidable in $O(1)$ per operation: the outer product grade rule $\grade{a \wedge b} = \grade{a} + \grade{b}$ is an integer addition constraint. PHG saturation termination is guaranteed by the monotone fixpoint argument over the finite activation lattice $\{\mathrm{Fresh} < \mathrm{Elaborated} < \mathrm{Saturated}\}$. The exponents and grades here are integers; whether the same decidability extends to the rational exponents that more advanced negative and fractional type forms introduce is taken up in a companion paper.

\textbf{What the framework verifies with SMT-backed decidability.} The DTS constraint system extends to include memory dimension constraints (enumeration sorts), capability coeffect constraints (finite domain reasoning), and target reachability constraints (bitvector arithmetic). These reduce to quantifier-free theories: QF\_LIA for dimensional constraints, QF\_UF for sort constraints, and QF\_BV for reachability. All three are decidable; the SMT queries generated during program semantic graph elaboration are guaranteed to terminate with a definitive answer. Section~6.7 of~\cite{dts-dmm} sets out the seam at which this verification is designed to ride the lowering: the constraint-consistency obligations would live as hyperedges on the same graph, discharged at design time and re-checked through lowering, so the verified properties stay adjacent to the code, not exported to a disconnected checker.

\textbf{Where the boundary lies.} The properties described above are first-order properties of the PHG's type and constraint structure. The framework does not attempt to prove properties of the program's semantic behavior over arbitrary inputs. Whether a physics-informed neural network trained with dimensionally consistent loss terms will converge, whether a mesh operation produces geometrically correct results for degenerate input configurations, whether a spatial dataflow pipeline will complete within a given latency budget under all possible scheduling decisions: these are properties of runtime behavior and require runtime verification or probabilistic guarantees. This boundary is a consequence of Rice's theorem. The PHG's contribution is not to eliminate this boundary but to push as much of the relevant structural verification as possible to the decidable side, so that what remains is a smaller, better-characterized set of unknowns.

The practical consequence is that a PHG-annotated Clifford algebra computation targeting the XDNA~2 NPU carries, as design-time certificates available in the editor before any build step: dimensional consistency of all physical quantities involved, grade correctness of all geometric products and joins, topological consistency of all mesh boundary relations, co-location feasibility of all tile assignments relative to the target's resource constraints, and numeric representation adequacy of all values relative to their dimensional ranges. The engineer does not need to run the program to discover that a grade-2 bivector was passed where a grade-1 vector was required. Lattice knows, and reports it, as the code is written.

\section{Related Work}

\subsection{Hypergraph Representations in Compilation}

Hypergraph representations in compilation are well established for VLSI placement and routing~\cite{karypis2000}, where the hyperedge naturally models the multi-way connections of a net. The PHG adapts this representation to the semantic level, where hyperedges carry type and dimensional annotations rather than purely topological connectivity.

\subsection{Geometric Algebra in Compilation and Machine Learning}

The Clifford Group Equivariant Neural Networks (CGENNs)~\cite{ruhe2023} established the viability of Clifford algebra-based architectures for physical simulation tasks. The Clifford-Steerable CNN work~\cite{zhdanov2024} extended this to equivariance under pseudo-Euclidean groups, and the Flash Clifford repository~\cite{flash-clifford} addresses the performance gap between theoretical and practical efficiency. None of this prior work integrates grade into a compiler type system; performance optimizations are achieved manually (Flash Clifford, Klein) or through runtime specialization (GAmphetamine, kingdon). The PHG provides the type-theoretic foundation for deriving these optimizations from first principles.

The ``Clean up your Mesh'' work~\cite{cleanup-mesh} establishes the $k$-simplex PHG encoding independently, in the geometric context, without the compilation framework. The mesh paper provides the geometric justification; this paper provides the compilation infrastructure.

\subsection{The DCont/Inet Duality and the PHG as Its Completion}

The DCont/Inet duality~\cite{dcont-inet} partitions computation expression compilation into two regimes: the DCont regime for sequential effectful computations (monadic structure, actors as sugared DCont), and the Inet regime for pure parallel computations (symmetric monoidal structure, structurally independent reductions). The PHG hyperedge is the Inet active pair generalized to arity $k$. That a computation's algebraic structure determines how its operations may compose is the organizing claim of the categorical deep learning program~\cite{gavranovic2024}; it is the same structure that, in the PHG, fixes which reductions are jointly active.

The grade annotations in the PHG determine which tuples are active: a geometric product of grade-1 and grade-2 elements produces an active pair for the grade-1 output (inner product) and an active triple for the grade-3 output (outer product part of the full product). These are structurally parallel reductions over the Clifford algebra's grade lattice, and the PHG represents them as simultaneously fireable hyperedges. The Inet dialect for MLIR~\cite{coll2025} implements the three Symmetric Interaction Combinators (Erase, Construct, Duplicate) with declarative rewrite rules. Grade-preserving operations are Constructor applications; grade-annihilating operations (such as $a \wedge a = 0$ for any grade-1 element $a$) are Eraser applications; grade-duplicating operations (quire FMA accumulation) are Duplicator applications.

\subsection{The Verse Calculus: Adjacent Territory}

Verse provides a denotational semantics for functional logic programming based on existential variables resolved through unification~\cite{verse2023}. Verse's unification-based non-determinism and the Fidelity Inet-path pure parallel reduction operate over related mathematical structures, which makes a brief structural comparison useful.

Two differences are architectural. First, Verse does not carry physical dimensions through compilation; a \texttt{float} in Verse is not a \texttt{float<newtons>}, and the compiler cannot use dimensional range to select posit versus IEEE-754 representations on an FPGA target. Second, Verse's existential variable resolution is a binary relation; multi-way geometric algebra operations, tile co-location constraints, and $k$-simplex topological consistency constraints do not reduce to sequences of binary unifications without introducing intermediate nodes that carry no semantic meaning. A Clifford dialect with $k$-ary outer product operations, as proposed in Section~\ref{sec:clifford}, expresses the three-way join of points to a plane as a single operation with a semantically exact type.

The positioning is complementary. The Verse Calculus supplies denotational semantics for a functional logic programming layer; the PHG with a proposed Clifford dialect supplies compilation infrastructure that carries grade and dimensional structure through lowering.

\subsection{The Source Language Semantic Ceiling}

A recurring pattern in the heterogeneous compute ecosystem is the attempt to recover semantic structure from programs written in C or C++, either through static analysis, program transformation, or purpose-built intermediate representations derived from the C source. This pattern appears in dataflow graph extraction for spatial accelerators, in CGRA mapping toolchains, and in the analysis passes that feed certain HPC-oriented processor architectures.

The structural ceiling on this approach is not a criticism of the engineering involved. It is a consequence of what C and C++ can express and what they cannot. C encodes data dependencies: the structure of the computation as a sequence of operations over arrays and scalars, with order determined by control flow and sizes determined by type declarations. A dataflow graph extracted from C faithfully captures these structural dependencies. What it cannot capture is semantic information that was never in the source: the physical dimensions of the values being computed, the grade structure of the geometric operations, the topological consistency requirements of the mesh being manipulated.

C++ templates can encode grade and dimensional information within the C++ type system, as demonstrated by the geometric algebra libraries surveyed in Section~3.4. But C++ templates are resolved by the C++ compiler and are not carried into the LLVM intermediate representation. The dimensional annotation on a Boost.Units quantity is enforced at the C++ level and is then no longer visible to downstream stages; the LLVM IR produced from that code is dimensionally unaware. C-to-dataflow extraction from C++ code compiled with dimensional template annotations receives the same dimensionally unaware LLVM IR. The annotation lifetime ends at the C++ frontend, well before the dataflow extraction step.

The PHG does not extract semantic structure from programs written without it; it preserves semantic structure that was encoded in the source language's type system and carried through compilation as a design commitment. The design-time detection cost of a semantic error is bounded and systematic; the runtime detection cost is unbounded and contingent on whether the error produces a visible failure or merely a subtly incorrect result.

\section{Future Work}

\subsection{Hypergraph Partitioning Integration}

The most immediate engineering task is integrating an existing hypergraph partitioning algorithm (hMETIS or PaToH) into the MLIR-AIE lowering path, consuming PHG co-location hyperedges as partition constraints and producing tile assignments and route configurations.

\subsection{Garamon and GAALOP Compatibility}

Both Garamon and GAALOP produce C code from algebra specifications. Farscape already generates F\# bindings for C libraries. The combination of Garamon's C++ library generation with Farscape's binding generation would provide a path for existing Garamon-generated code to be imported into the Fidelity framework with PHG grade annotations, without requiring full reimplementation in Clef.

\subsection{Formal Grade Decidability Proof}\label{sec:grade-decidability}

Proposition~\ref{prop:grade-dim-axis} establishes that grade is a DTS dimension axis for the outer product, where grade addition is an equality constraint in $\mathbb{Z}$. The geometric product is different: its grade output is set-valued, $\grade{c} \in \{|p-q|, |p-q|+2, \ldots, p+q\}$, which is a strict extension of the DTS's equality constraint vocabulary. A formal treatment would establish the decidability, completeness, and principal-types properties of the combined system at the same level of rigor as the existing DTS decidability argument. The open question is whether the set-valued constraint can be decomposed into a disjunction of equality constraints, each corresponding to a specific use context that collapses the set to a point, or whether the underlying constraint theory must be extended to first-class set-valued integer constraints. The literature on constraint programming with set variables over finite domains is the natural starting point for the latter direction. Integration with the memory dimension constraints of~\cite{dts-dmm} is a separate requirement: memory dimensions remain equality-valued in $\mathbb{Z}^n$, and the combined constraint system must preserve the decidability of each axis when they are composed.

\subsection{Conformal vs.\ Projective Algebra Target Selection}

CGA ($\mathbb{R}^{4,1}$) and PGA ($\mathbb{R}^{3,0,1}$) serve different geometric computation purposes. The representation selection function of~\cite{dts-dmm} could be extended to algebra selection: given a computation's geometric primitive types, the compiler could select the appropriate algebra and emit the corresponding MLIR. This would make the choice of projective versus conformal model a compile-time optimization rather than an architectural decision made at the source code level.

\subsection{Quantum-Classical Compilation and the PHG}
\label{sec:quantum}

The structural arguments in this paper for geometric algebra and spatial dataflow mapping generalize to a third domain that warrants explicit identification, though its realization depends on developments in quantum hardware that remain in progress.

Fault-tolerant quantum computing proposals, particularly those based on topological error correction codes such as the surface code and color code, use syndrome measurements that are intrinsically multi-way operations. A surface code syndrome measurement checks the parity of four data qubits simultaneously; the measurement outcome is a joint property of all four qubits, not of any pairwise subset. In the PHG, this is a 4-to-1 hyperedge:
\[
f = \bigl(\{q_1, q_2, q_3, q_4\},\; \mathrm{syndrome},\; \lambda_{S_X}\bigr)
\]
where $\lambda_{S_X}$ carries the stabilizer group element $\hat{S}_X = X_1 \otimes X_2 \otimes X_3 \otimes X_4$ as a type annotation. The correctness of the syndrome measurement is a structural property of the hyperedge, verifiable by the PHG's saturation semantics, rather than an emergent property of a sequence of binary operations.

Unitarity preservation (the requirement that every gate operation maintain the quantum state's 2-norm at exactly 1.0) is a dimensional invariant in the DTS sense and is expressible as an SMT-LIB2 proof obligation that the existing verification infrastructure can discharge at elaboration time. The structural claim of this section is the hyperedge encoding of multi-qubit joint measurements; we defer any claim about posit versus IEEE-754 precision for quantum amplitudes to a dedicated treatment. The relevant precision metric for amplitudes on the complex unit circle is distance from the unit circle after gate operations, not relative error near $1.0$ on the real line, and the derivation connecting posit tapering to that metric under realistic gate fidelity thresholds is outside the scope of this paper.

A quantum backend for the Fidelity compilation pipeline would require a QIR-compatible MLIR lowering path, the development of which is contingent on the maturation of fault-tolerant quantum hardware targets. The PHG hyperedge structure for multi-qubit joint operations and syndrome measurements, and the PSG-level proof generation for structural correctness, are present for reasons independent of quantum computing and compose naturally in a quantum-classical context. We identify this as a well-motivated direction for future investigation; prior published work from this project has examined the structural connections in more detail~\cite{quantum-optionality}.

\subsection{A Proposed Clifford Dialect}
\label{sec:clifford}

The Fidelity framework has intentionally avoided introducing custom MLIR dialects, relying instead on the established dialect ecosystem: DCont and Inet for computation expression compilation, SCF and MemRef for control and memory, LLVM for native code generation, CIRCT for FPGA synthesis, and MLIR-AIE for spatial dataflow targeting. This restraint is correct as a general principle. Custom dialects introduce maintenance burden, create barriers to inter-framework interoperability, and complicate the reasoning about the compilation pipeline's invariants.

We believe, however, that a Clifford dialect would constitute a principled and necessary exception, and we propose it here as a concrete direction for future development. Every other computation in the Fidelity pipeline can be expressed in terms of existing MLIR primitives without information loss, because the grade annotations that guide compilation reside in the PSG as codata and are consumed before the MLIR emission step. A geometric product is different in character. Its grade contribution structure is not expressible in the \texttt{arith} dialect without either materializing the full Cayley table (which is 85--95\% sparse and loses all algebraic provenance) or introducing an explicit intermediate representation that carries grade information as dialect attributes. We believe a Clifford dialect would be that intermediate representation: the one level in the lowering pipeline where grade would be a first-class MLIR type parameter rather than PSG codata.

A minimal proposed operation set for such a dialect would include: \texttt{clifford.gp} (geometric product, with set-valued grade annotation on the result), \texttt{clifford.op} (outer product, grade addition with zero-annihilation), \texttt{clifford.ip} (inner product, grade subtraction), \texttt{clifford.sandwich} (grade-preserving rotation/reflection product), \texttt{clifford.meet} (regressive product, grade complementation), \texttt{clifford.grade\_select} (grade projection, used at coordinate extraction boundaries), and \texttt{clifford.norm} (metric norm extraction, output grade 0). Each operation would carry the algebra specification (metric signature $p, q, r$ and dimension $n$) as an invariant attribute.

We anticipate that such a dialect, if developed, would enable a flat, tractable compute graph at the MiddleEnd level: grade-annotated geometric operations with no Cayley table indirection and no runtime specialization cost, from which target-specific lowering paths could diverge with grade information intact. On CPU targets, grade annotations would guide SIMD intrinsic selection; on GPU targets via Triton, they would determine warp-aligned kernel structure; on FPGA targets via CIRCT, they would constrain arithmetic pipeline synthesis; on NPU targets via MLIR-AIE, they would interact with PHG co-location hyperedges to assign grade-specific computations to AI Engine tiles. Whether this pipeline flattening holds in practice across all four target paths is a question that requires implementation and measurement; we present it here as a well-motivated hypothesis grounded in the analysis of Section~3.2 and the empirical results of Flash Clifford~\cite{flash-clifford}.

\subsection{Neuromorphic Targets and Temporal Hyperedges}

Neuromorphic processors, including Intel Loihi-2 and the broader family of spiking neural network hardware, present a class of computation that the PHG formalism describes more naturally than any existing neuromorphic programming framework. The structural gap in current tooling is precisely located and the path to closing it follows directly from the foundations established in this paper.

The fundamental computational primitive of a spiking neural network is coincidence detection: a neuron fires when sufficiently many presynaptic inputs spike within a temporal window $\tau$. For a neuron with $N$ presynaptic inputs that fires when $k$ of them spike within $\tau$, the firing event is a joint property of the full $k$-tuple of arriving spikes, not of any individual input. The natural representation in the PHG is a $k$-to-1 hyperedge:
\[
f = \bigl(\{s_1, s_2, \ldots, s_k\},\; \mathrm{fire},\; \lambda_{\tau}\bigr)
\]
where $\lambda_\tau$ carries the temporal window constraint $\max_i t_i - \min_i t_i \leq \tau$ with dimensional annotation $\langle \mathrm{ms} \rangle$. PHG saturation fires only when all $k$ source spike nodes are present and the temporal constraint is satisfied. This is the correct formal model for coincidence detection, and it is not expressible as a set of binary edges without loss of the joint temporal constraint. No current neuromorphic programming framework, including Intel's Lava, PyNN, or NEST, represents this constraint as a typed structural property. Coincidence windows and membrane time constants are unitless floats calibrated by convention; their dimensional consistency across network layers is neither declared nor verified.

The DTS provides the mechanism for dimensional verification of temporal dynamics directly. The membrane time constant $\tau_m$ carries annotation $\langle \mathrm{ms} \rangle$, the firing threshold carries annotation $\langle \mathrm{mV} \rangle$, and the synaptic weight carries the derived annotation $\langle \mathrm{mV \cdot ms^{-1}} \rangle$. A network whose temporal dynamics are dimensionally inconsistent across layers is a design-time error under the same inference machinery that verifies physical dimensions in PINN loss terms. Time is a dimension axis in the DTS abelian group framework; the extension to neuromorphic computation requires no new formal machinery.

Spike Timing Dependent Plasticity (STDP), the dominant local learning rule for spiking networks, has a coeffect signature that parallels forward-mode automatic differentiation. The weight update at each synapse depends only on the relative timing of pre- and postsynaptic spikes, which are locally available without a global error signal or activation tape. Each synapse requires $O(1)$ auxiliary state (one trace variable per spike direction), and Loihi-2 implements STDP in hardware via an on-chip learning engine that maintains these traces without off-chip memory access. The DMM coeffect discipline verifies the stack-eligible allocation of STDP trace variables at design time, with the same machinery that verifies forward-mode gradient allocation in Section~5.2.

The lowering path from PHG-annotated neuromorphic computations to Loihi-2 follows the same structure as the MLIR-AIE lowering path for NPU targets. Two extensions to the PHG hyperedge vocabulary are required: the temporal coincidence hyperedge described above, and an STDP learning hyperedge annotating each synapse with the learning rate and time constant parameters $A_+$, $A_-$, $\tau_+$, $\tau_-$ with their dimensional annotations. The lowering pass consumes these annotations and emits the corresponding NxCore API calls configuring neuron compartment thresholds, membrane time constants, and learning engine parameters. The PHG's per-target reachability bitvector handles heterogeneous deployments naturally: a network partitioned across Loihi-2 (for event-driven temporal processing), XDNA~2 (for dense geometric algebra computation), and CPU (for coordination) is a single PHG with per-node reachability annotations, lowered through NxCore, MLIR-AIE, and LLVM from the same shared intermediate representation.

Continuous local adaptation is compatible with discrete certified versioning within the Olivier constellation. STDP-trained weights update continuously on-chip via Loihi-2's learning engine. Because each weight update is local and the DMM coeffect discipline bounds the auxiliary state per synapse, the memory footprint of the learning process is deterministic and verifiable at design time. When the weight configuration warrants registration as a new model version, the Fidelity versioning infrastructure issues the corresponding PHG structural certificate and post-quantum signed record, preserving continuous local adaptation alongside discrete certified versioning.

\subsection{Grade-Structured Tangent Selection for Multi-Tangent Forward Gradients}
\label{sec:grade-structured-tangents}

Section~\ref{sec:phg-forward-mode-ad} established the PHG's contribution to forward-mode automatic differentiation for geometric algebra computations: the quire provides exact accumulation for the directional derivative's inner product, and the grade annotations carry through the tangent computation by the same grade inference that applies to the primal. Recent work on multi-tangent forward gradients~\cite{flugel2026multi} motivates an extension of this analysis to the approximation quality of the gradient estimator itself.

Fl\"{u}gel et al.\ generalize the single-tangent forward gradient of~\cite{baydin2022} to a multi-tangent estimator over $k$ linearly independent tangents. With $\mathbf{V} = (v_1 | \dots | v_k)$ and $U = \mathrm{span}(V)$, the orthogonal projection $P_U(\nabla f) = \mathbf{V}(\mathbf{V}^\top \mathbf{V})^{-1} \mathbf{V}^\top \nabla f$ is the approximation-optimal combination of the single-tangent estimates in the subspace $U$. The approximation quality is a function of the ratio $k/n$, where $n$ is the parameter or activation dimensionality; the estimator recovers $\nabla f$ exactly when $\nabla f \in U$, and the cosine similarity to the true gradient for random tangents improves monotonically in $k/n$. For large $n$, the ratio $k/n$ must remain favorable for the approximation to be useful, which limits the applicability of multi-tangent forward gradients at scale with generic random tangent sampling.

For geometric algebra neural networks, grade inference changes what $n$ should mean. The full parameter count of a network over $\mathrm{Cl}(p,q)$ includes all multivector components at all represented grades. The gradient, however, is structured by the same grade algebra that structures the forward pass: an update to a grade-$g$ weight component is constrained to the grade combinations that the geometric product rule permits. The non-zero subspace of the gradient, which we refer to as the gradient's effective dimensionality, is substantially smaller than the full parameter count when the computation uses a restricted grade range, which Section~\ref{sec:gp-sparsity} established is the typical case for PGA, CGA, and similar practical algebras. The 20$\times$ arithmetic reduction from grade-aware sparsity in $\mathrm{Cl}(3,0,1)$ (Section~\ref{sec:gp-sparsity}) has a direct gradient analogue: the non-zero components of the gradient live in the same sparse subspace that the forward Cayley table selects.

Grade-structured tangent selection is the integration of this observation with the multi-tangent framework. Rather than sampling tangents from $\mathbb{R}^n$ with generic random distributions, the tangents can be drawn from the grade-structured effective subspace identified by PHG grade inference at design time. The effective $n$ for the $k/n$ ratio drops by the sparsity factor of the grade structure; the approximation quality of a fixed $k$ improves correspondingly. The multi-tangent framework provides the optimal combination strategy for the resulting tangents; the PHG provides the structural information that narrows the subspace to which those tangents belong. Neither component can provide the result alone: the multi-tangent framework has no access to grade structure without a typed compilation architecture, and the PHG's grade inference describes the structure but does not itself define an estimator.

Empirical validation of this optimization for specific geometric algebra architectures, including PGA convolutional networks on point-cloud inputs and CGA equivariant layers for conformal transformations, is an item of future work. The claim being made here is that the Fidelity framework's type-carrying compilation provides the structural information that the multi-tangent framework has no other way to obtain, and that the combination produces an approximation-quality improvement whose magnitude is computable at design time from the grade annotations on the network's layers.

\subsection{BAREWire Geometric Serialization}

BAREWire, the Fidelity framework's implementation of the BARE binary protocol for IPC and network interchange, currently serializes scalar and tensor data. Extending BAREWire to handle multivector serialization with grade-annotated fields would allow geometric computation results to be transmitted between processes with their algebraic structure preserved, enabling distributed geometric computation pipelines where each process operates on a specific grade subspace and the full multivector is assembled at the boundary.

\section{Conclusion}

The Program Hypergraph is a principled and minimally intrusive generalization of the Program Semantic Graph. It preserves all existing PSG properties, adding hyperedges of arity greater than one for computations that cannot be correctly represented as sequences of binary relations without information loss.

The central technical claims of this paper are: that grade in Clifford algebra is a natural DTS dimension axis, derivable through the existing inference machinery; that the geometric product sparsity implied by grade inference eliminates the primary computational objection to Clifford algebra neural networks, at compile time, without manual specialization; that the $k$-simplex structure of mesh topology is a direct PHG hyperedge encoding that provides exact geometric predicate evaluation without floating-point comparison; and that spatial dataflow architecture mapping requires hyperedge co-location constraints that cannot be correctly expressed as binary cliques.

The assessment of the existing geometric algebra library ecosystem reveals a consistent gap: no existing system integrates grade as a first-class type-level property that survives into a compiler's semantic graph, participates in memory placement and representation selection decisions, and is exposed through a language server protocol. The PHG fills this gap within the Fidelity compilation framework, extending the design-time analysis capabilities established in~\cite{dts-dmm} to geometric correctness, topological consistency, and spatial co-location.

The convergence of these properties in a single graph structure, one that the engineer navigates through the same language server interface that surfaces dimensional diagnostics and escape analysis, is the practical consequence. Physics-informed computation, geometric algebra neural networks, and spatial accelerator mapping are three distinct application domains with a common structural requirement: multi-way constraints over sets of computation nodes. The PHG provides the representation; the DTS+DMM inference machinery provides the analysis; the Fidelity compilation pipeline provides the execution path.

\section*{Acknowledgments}

The author thanks John L.\ Gustafson for direct support on posit arithmetic and value distribution analysis. The treatment of quire-backed gradient accumulation in Section~5.2 and the posit representation selection for geometric algebra field quantities reflect his technical guidance. The author also thanks Paul Snively for his input on the geometric algebra problem space and for opening lines of investigation that this paper pursues. The author also thanks Martin Coll, whose work on the Inet dialect for MLIR informed the DCont/Inet duality analysis in Section~1.4 and the interaction net generalization that motivated the PHG hyperedge thesis.

\section*{Software Availability}

The Clef language, Composer compiler, and supporting libraries described in this paper are developed under the Fidelity Framework project. Source repositories are available at \url{https://github.com/FidelityFramework}. The language specification, design rationale, and compiler documentation are published at \url{https://clef-lang.com}. All components referenced in this paper, including the PHG saturation engine, grade inference pipeline, and BAREWire interchange protocol, are under active development.

\appendix

\section{Grade Inference Example}

Consider an unannotated Clef function computing the area of a triangle from three PGA points. The function takes three grade-1 homogeneous point coordinates in $\mathbb{R}^{3,0,1}$ and returns a scalar area value with dimension $\mathrm{m}^2$. The body performs two successive outer product joins, then extracts the pseudoscalar magnitude.

In the PSG binary edge model, the inference proceeds through three nodes. The first outer product of two grade-1 points produces a grade-2 line node. The second outer product of that line with the third point produces a grade-3 trivector node. The magnitude extraction produces the scalar output. The intermediate line node participates in four binary edges: two from the first join, one to the second join, and one implicit dependency edge for the magnitude extraction.

In the PHG, the triangle join is a single hyperedge:
\[
f = \bigl(\{p_1, p_2, p_3\},\; \mathrm{face},\; \lambda_{\wedge}\bigr)
\]
where $\grade{\mathrm{face}} = 3$ follows from the joint grade constraint $\grade{p_1} + \grade{p_2} + \grade{p_3} = 1 + 1 + 1 = 3$. The intermediate line node does not exist in the PHG representation. The triangle is a direct three-way join, and its correctness is a property of all three source nodes jointly.

The sparsity consequence is concrete. In the binary PSG encoding, the second join considers the full grade-2 bivector space (6 components) acting on the grade-1 vector space (4 components), producing 24 Cayley table entries of which only 11 are non-zero. The PHG encoding observes grade-3 as the direct output of the three-way join and emits the determinant of the $3\times3$ matrix of point homogeneous coordinates: 6 multiplications and 5 additions, with no table indirection.

\medskip
\begin{center}
\begin{tabular}{ll}
\toprule
Property & Value \\
\midrule
Output type & $\mathrm{float}\langle\mathrm{m}^2\rangle$, grade-3 trivector in $\mathbb{R}^{3,0,1}$ \\
Hyperedge arity & 3-to-1, joint grade constraint $1+1+1=3$ \\
Non-zero Cayley entries & 11 of 24 (54\% reduction vs.\ dense evaluation) \\
x86\_64 representation & float64, 11 multiplications + 5 additions \\
Xilinx FPGA representation & posit32 (es=2), range $[10^{-6}, 10^4]\;\mathrm{m}^2$ covered \\
Quire applicability & 6 FMA operations; exact accumulation on posit targets \\
Precision gain vs.\ float32 & ${\sim}3\times$ in the $[10^{-3}, 10^1]\;\mathrm{m}^2$ subrange \\
\bottomrule
\end{tabular}
\end{center}

The cross-target representation difference is a consequence of the dimensional range annotation. The value distribution of physical triangle areas in a mesh, typically between $10^{-3}$ and $10^1\;\mathrm{m}^2$ after normalization to a characteristic length scale, lies well within posit32's high-precision near-unity region. The DTS representation selection function selects posit32 for the FPGA target because this dimensional range analysis, not a user annotation, justifies it.

\section{Co-location Hyperedge Example}

Consider a convolution operation over a spatial feature map, to be mapped to a $2\times2$ tile configuration on the XDNA~2 NPU. The computation decomposes into four pipeline stages: a load stage ($A$) that fetches input feature data from external memory, two compute stages ($B$, $C$) that perform partial accumulations in parallel, and a reduce stage ($D$) that combines their outputs and writes the result.

The PHG co-location hyperedge for this pipeline is:
\[
f = \bigl(\{\mathrm{load}_A,\; \mathrm{compute}_B,\; \mathrm{compute}_C,\; \mathrm{reduce}_D\},\; \mathrm{output},\; \lambda_{\mathrm{tile}}\bigr)
\]

\medskip
\begin{center}
\begin{tabular}{ll}
\toprule
Annotation field & Value \\
\midrule
Source set cardinality & 4 ($2\times2$ tile block) \\
Route topology & $A \to B,\; A \to C,\; B \to D,\; C \to D$ (reduction tree) \\
DMA constraint & $\mathrm{load}_A$ and $\mathrm{reduce}_D$ share a DMA channel \\
Synchronization & $\mathrm{reduce}_D$ waits for both $\mathrm{compute}_B$ and $\mathrm{compute}_C$ \\
\bottomrule
\end{tabular}
\end{center}

\medskip
The route topology for the four pipeline stages is shown in Figure~\ref{fig:tile}.

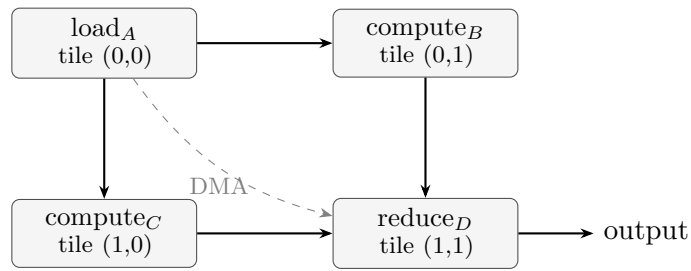
\begin{figure}[h]
\centering
\begin{tikzpicture}[node distance=1.6cm and 1.8cm]
  \node[box, text width=2.2cm] (A) {load$_A$\\[-2pt]{\footnotesize tile (0,0)}};
  \node[box, text width=2.2cm, right=of A] (B) {compute$_B$\\[-2pt]{\footnotesize tile (0,1)}};
  \node[box, text width=2.2cm, below=of A] (C) {compute$_C$\\[-2pt]{\footnotesize tile (1,0)}};
  \node[box, text width=2.2cm, right=of C] (D) {reduce$_D$\\[-2pt]{\footnotesize tile (1,1)}};
  \node[right=1.0cm of D] (OUT) {output};

  \draw[arr] (A) -- (B);
  \draw[arr] (A) -- (C);
  \draw[arr] (B) -- (D);
  \draw[arr] (C) -- (D);
  \draw[arr] (D) -- (OUT);
  \draw[darr, bend right=20] (A) to node[below, font=\footnotesize] {DMA} (D);
\end{tikzpicture}
\caption{Route topology for the $2\times2$ tile co-location hyperedge. The DMA channel (dashed) connects load$_A$ and reduce$_D$ directly, in addition to the data flow through compute stages $B$ and $C$.}
\label{fig:tile}
\end{figure}

PHG saturation fires when all four nodes are assigned target resources. The MLIR-AIE lowering pass consumes the hyperedge annotation and emits the tile kernel code, DMA buffer descriptors, and synchronization primitives for this specific topology.

Expressing the same constraint as six pairwise co-location edges ($A$-$B$, $A$-$C$, $A$-$D$, $B$-$C$, $B$-$D$, $C$-$D$) plus a separate synchronization annotation on $D$ encodes a strictly weaker statement: it asserts that each pair must be co-located independently, which does not entail that all four are co-located as a unit. The hyperedge encoding is the correct statement of the constraint. The binary clique encoding is an approximation that produces the same result when all six pairwise constraints are satisfied simultaneously, but cannot distinguish ``all four on the same tile block'' from ``each pair satisfies a weaker adjacency requirement.''

\end{document}